\begin{document}
\begin{frontmatter}
  \title{Dependent Type Refinements for Futures} 						
  \author{Siva Somayyajula\thanksref{myemail}}	
   \author{Frank Pfenning\thanksref{coemail}}		
   \address{Computer Science Department\\ Carnegie Mellon University\\				
    Pittsburgh, USA}  							
   \thanks[myemail]{Email: \href{mailto:ssomayya@cs.cmu.edu} {\texttt{\normalshape
        ssomayya@cs.cmu.edu}}} 
  \thanks[coemail]{Email:  \href{mailto:fp@cs.cmu.edu} {\texttt{\normalshape
        fp@cs.cmu.edu}}}
\begin{abstract} 
Type refinements combine the compositionality of typechecking with the expressivity of program logics, offering a synergistic approach to program verification. In this paper we apply dependent type refinements to SAX, a futures-based process calculus that arises from the Curry-Howard interpretation of the intuitionistic semi-axiomatic sequent calculus and includes unrestricted recursion both at the level of types and processes. With our type refinement system, we can reason about the partial correctness of SAX programs, complementing prior work on sized type refinements that supports reasoning about termination. Our design regime synthesizes the infinitary proof theory of SAX with that of bidirectional typing and Hoare logic, deriving some standard reasoning principles for data and (co)recursion while enabling information hiding for codata. We prove syntactic type soundness, which entails a notion of partial correctness that respects codata encapsulation. We illustrate our language through a few simple examples.
\end{abstract}
\begin{keyword}
futures, type refinements, partial correctness, sequent calculus
\end{keyword}
\end{frontmatter}

\section{Introduction}
Type refinements internalize assertions into the type structure of functional programs, combining the compositionality of typechecking with the expressivity of program logics. While having comparable verification capabilities to ``traditional'' dependent type theories \cite{Vazou2018popl}, \emph{dependent type refinements} can also have their assertion logics extended for domain-specific verification \cite{Xi1999popl}. This line of work has also revealed close connections between the proof theory of Hoare logic and that of static type refinement discipline---some correspondences include path-sensitive elimination rules to the conditional rule, substitution to composition, and subsumption to consequence.

Recent work on \emph{dependent session types} \cite{Toninho2011ppdp} have begun to transport related results to process calculi, yet are limited by the need to carefully interface the linearity inherent in session types with type dependency. Solutions tending towards traditional type theory marry a separate dependently-typed proof language with a session-typed process calculus \cite{Toninho2018fossacs}, whereas those in the space of type refinements are typically limited to an index language incapable of expressing the complete nuances of process dynamics \cite{Toninho2021ppdp}. This raises the question: is it possible to stake out a middle ground and adapt expressive dependent type refinements to a process calculus?
In this article, we develop such type refinements within SAX, a futures-based process calculus that arises from the Curry-Howard interpretation of the intuitionistic semi-axiomatic sequent calculus \cite{DeYoung2020fscd}. Thus, we are not bound by the constraints of linearity. However, two major questions generate the design space:
\begin{enumerate}
\item What is a suitable program logic and to what extent should/can it account for process dynamics, including higher-order (co)data?
\item How do types internalize assertions and how should the type system be presented?
\end{enumerate}
A core desideratum typically forces certain answers to these questions---for example, to ensure decidable typechecking, \emph{liquid types} \cite{Rondon2008pldi} are designed around the following answers:
\begin{enumerate}
\item Assertions are quantifier-free, with quantifiers encoded as dependent types \cite[Section 3]{Vazou2018popl}. Functions and their extensional equality are encoded by first-order approximation \cite[Sections 4 and 5]{Vazou2018popl} and typeclasses \cite{Vazou2022haskell}, respectively, in contrast to higher-order program logics \cite{Gianas2008mpc}.
\item Refined types are distinguished type constructors and typing is \emph{bidirectional} \cite{Pierce2000acm,Dunfield2022acm}.
\end{enumerate}
Our guiding principle is to follow the proof theory of the semi-axiomatic sequent calculus, answering these questions as follows:
\begin{enumerate}
\item We develop a first-order theory of SAX values, function applications, and lazy record projections. In particular, we \emph{avoid} directly reifying processes into the assertion logic.
\item We use bidirectional typing not to effect algorithmic typechecking, but to determine the shape of types and their associated typing rules. In particular, we synthesize the semi-axiomatic sequent calculus with Hoare logic, viewing the former as an intermediate point between (bidirectional) natural deduction and the (unidirectional) sequent calculus. While we recover the expected properties for positive (data) types, curiously, refinements nested under a negative (codata) type may hide information from assertions attached to said type, providing a facility for codata \emph{encapsulation}.

Considering recursion both at the level of types and processes, our type system establishes partial correctness, complementing \emph{sized type refinements} used to guarantee termination in SAX \cite{Somayyajula2022fscd}. Following \emph{op.~cit.}, we view (sub)typing derivations for equirecursive types \cite{Lakhani2022esop} and recursive programs, respectively, as \emph{infinite proofs} \cite{Brotherston,Derakhshan19arxiv} generated by \emph{mixed inductive-coinductive} inference systems \cite{Basold2018,Danielsson2010mpc}. Standard \emph{assume-guarantee reasoning} for recursive programs that arises from typing derivation circularity, when combined with subsumption, uniformly admits reasoning with induction, coinduction, and mixed induction and coinduction within the language. 
\end{enumerate}
In summary, our primary contribution is Dependent Refined SAX (DRSAX): a dependent type refinement system for SAX (Section \ref{sec:drsax}) with a corresponding type soundness result entailing \emph{observable} partial correctness (Section \ref{sec:semantics}). Essentially, we show that assertions about objects that are not encapsulated hold directly. Our secondary contribution is the design regime listed above, which leads us to a novel derivation of codata encapsulation and a uniform consideration of induction, coinduction, and mixed induction and coinduction within DRSAX.
\section{DRSAX: Semi-Axiomatic Type Theory Meets Hoare Logic}\label{sec:drsax}
In this section, we develop DRSAX by first commenting on the judgmental structure of the semi-axiomatic sequent extended for our purposes. After quickly reviewing some ancillary definitions, we examine the relevant typing rules with examples.
\subsection{Judgmental Structure}\label{sec:drsax-judgments}
In short, the semi-axiomatic sequent calculus replaces the typical right and left rules for positive and negative types (i.e., the \emph{non-invertible} ones), respectively, with axioms. The corresponding typing judgment for processes takes on the following form:
\[\overbrace{x:A,\ldots,y:B}^{\displaystyle\Gamma}\vdash P(x,\ldots,y,z)\div(z:C)\]
The \emph{process} $P$ may perform blocking reads from \emph{source addresses} $x,\ldots,y$ of \emph{futures} and must perform a non-blocking write exactly once to the future addressed by the \emph{destination} $z$ according to \emph{types} $A,\ldots,B$ and $C$, respectively, corresponding to asynchronous communication with futures \cite{Halstead85}. Data addressed are values and process continuations of \emph{positive} and \emph{negative} type, respectively, recalling the binary term and type distinction of call-by-push-value \cite{Levy1999tlca}. To make the jump to Hoare logic, we attach \emph{preconditions} to the antecedents and a \emph{postcondition} to the succedent with \emph{refined types}: the judgment becomes $x:\A$ where $\A$ has the form $A\mid\lambda x.\,\phi(x)$. $\phi,\psi,\chi,\ldots$ are \emph{assertions} from the external logical theory described in the next sub-section. As shorthand, we write $x:A\mid\phi(x)$ to conflate the antecedent or succedent variable with that bound in $\phi$ or $\phi(\cdot)$ when the $x$ need not be mentioned.
\[x:\A,\ldots,y:\B(x,\ldots)\vdash P(x,\ldots,y,z)\div(z:\CC(x,\ldots,y))\]
We encounter questions immediately---for example, considering disjunction as a labelled sum type $\oplus\{\ell:A_\ell\}_{\ell\in S}$, which of the following is a more appropriate right axiom? Note that any non-refined type $A$ is canonically the refined type $A\mid\top$.
\[\infer[{$\oplus$}R, {$k\in S$}]{}{\Gamma,x:A_k\vdash\ldots\div(y:\oplus\{\ell:A_\ell\}_{\ell\in S}\mid y\equiv k\cdot x)}\quad\text{vs.}\quad\infer[{$\oplus$R, $k\in S$}]{}{\Gamma,x:A_k\mid\phi(k\cdot x)\vdash\ldots\div(y:\oplus\{\ell:A_\ell\}_{\ell\in S}\mid\phi(y))}\]
The framework for bidirectional typing in \cite{Dunfield2004popl,Dunfield2022acm} tells us which: the principal judgment's data in a positive introduction rule are always \emph{inputs}. Thus, the second rule is clearly canonical, as $\oplus\{\ell:A_\ell\}_{\ell\in S}$ and $\phi(y)$ are inputs unlike in the first rule. Peculiarly, $A_k$ and $\phi(k\cdot x)$ are \emph{outputs}, indicating that type information flows from right to left, like in \emph{backwards bidirectional typing} \cite{Chlipala2005tldi,Zeilberger2015types,Dunfield2022acm}. In particular, this rule simulates bottom-up flow in natural deduction:
\[\infer[{$\oplus$I, $k\in S$}]{\Gamma\vdash\ldots:A_k}{\Gamma\vdash\ldots:\oplus\{\ell:A_\ell\}_{\ell\in S}}\]
Dually, left axioms for negative types flow from left to right, corresponding to the top-to-bottom flow for negative eliminations in natural deduction. Consider the following example of negative conjunction as a lazy record type (omitting refinements in the axiom for the moment).
\[\infer[{\&E, $k\in S$}]{\Gamma\vdash\ldots:\&\{\ell:A_\ell\}_{\ell\in S}}{\Gamma\vdash\ldots:A_k}\quad\rightsquigarrow\quad\infer[{\&L, $k\in S$}]{}{\Gamma,y:\&\{\ell:A_\ell\}_{\ell\in S}\vdash\ldots\div(x:A_k)}\]
Taking a step back and thinking of a (non-semi-axiomatic) sequent calculus as an algorithmic type system, the resulting hypothetical judgment would appear as follows: 
\[x\To A,\ldots,y\To B\vdash P(x,\ldots,y,z)\div(z\From C)\]
where the arrows $\To$ and $\From$ distinguish between antecedent and succedent judgments, respectively, all of which are inputs. Thus, we would have \emph{unidirectional} process typing from the bottom up. As we have seen however, the semi-axiomatic sequent inherits some bidirectionality from natural deduction, in which both antecedents and the succedent may be outputs. Thus, we must allow $\From$ and $\To$ judgments to respectively appear in $\Gamma$ and the succedent as outputs.  We call the resulting hypothetical judgment(s) \emph{process typing}, defined in Figure \ref{fig:statics-judg} along with ancillary judgments (wellformedness judgments are standard and omitted).
\begin{figure}
\begin{empheq}[box=\fbox]{align*}
&\text{judgments }&&J:=(x\From A\mid\phi(x))\mid(x\To B\mid\phi(x))\\
&\text{contexts }&&\Gamma:=\cdot\mid\Gamma,J\\
&\text{assertion sequent}&&\Gamma\vdash\phi\\
&\text{subtyping}&&\Gamma\vdash A\leq B\\
&\text{process typing}&&\Gamma\vdash P\div J
\end{empheq}
\caption{Judgments}
\label{fig:statics-judg}
\end{figure}
\begin{figure}
\centering
\HRule\\
\begin{subfigure}[b]{0.56\textwidth}
\begin{tabular}{r l l}
$A:=$   & $A^+\mid A^-$\\
$\mid$  & $X$                                   & recursive type ($X=A$)\\
$\A:=$ & $(A\mid\lambda x.\,\phi(x))$ & refined type\\
$A^+:=$ & $\mathbf{1}$                          & (positive) unit\\
$\mid$  & $(x:\A)\otimes B(x)$      & dependent eager pairs\\
$\mid$  & ${\oplus}\{\ell:A_\ell\}_{\ell\in S}$ & eager sums\\
$A^-:=$ & $(x:\A)\to\B(x)$          & dependent functions\\
$\mid$  & ${\&}\{\ell:\A_\ell\}_{\ell\in S}$     & lazy records
\end{tabular}
\caption{Types}
\end{subfigure}
\vline
\begin{subfigure}[b]{0.43\textwidth}
\begin{tabular}{r l l}
$s,t:=$   & $x,y,z,\ldots$ & address variables\\
$\mid$  & $a,b,c,\ldots$ & runtime addresses
\end{tabular}
\begin{tabular}{r l l}
$V:=$  & $\pair{}$ & ($\unit\R$)\\
$\mid$ & $\pair{s,t}$ & (${\otimes}\R,{\to}\L$)\\
$\mid$ & $\ell\cdot t$ & (${\oplus}\R,{\&}\L$)
\end{tabular}\\\\
\begin{tabular}{r l l}
$K:=$  & $\pair{}\To P$ & ($\unit\L$)\\
$\mid$ & $\pair{x,y}\To P(x,y)$ & (${\otimes}\L,{\to}\R$)\\
$\mid$ & $\{\ell\cdot x\To P_\ell(x)\}_{\ell\in S}$ & (${\oplus}\L,{\&}\R$)
\end{tabular}
\caption{Values and Continuations}
\end{subfigure}
\HRule\\
\begin{subfigure}[c]{\textwidth}
\begin{tabular}{r l l l}
$P,Q:=$ & $t\from s$         & copy contents of $s$ to $t$ & (identity)\\
$\mid$  & $x\from P(x);Q(x)$ & spawn $P$ writing to $x$, & (cut)\\
        &                    & proceed concurrently as $Q$\\
$\mid$  & $t.V$              & write $V$ to $t$, or & (positive right/negative left rule)\\
        &                    & pass $V$ to continuation in $t$\\
$\mid$  & $\Case t\,K$       & pass value in $t$ to $K$ or, & (negative right/positive left rule)\\
        &                    & write $K$ to $t$\\
$\mid$ & $(x:\A)\anno P$ & refined type annotation & (\textsc{annoL/R})\\
$\mid$ & $f(\overline{s},t)$ & definition call & $(f(\overline{x:\A},y:\CC(\overline{x}))=P(\overline{x},y))$
\end{tabular}
\caption{Processes}
\HRule
\end{subfigure}
\caption{Syntax}
\label{fig:syntax}
\end{figure}
\subsection{Syntax}
We briefly comment on the syntax for addresses, types, and processes in Figure \ref{fig:syntax}.
\begin{itemize}
\item We distinguish between \emph{address variables} introduced in the previous subsection and \emph{runtime addresses} which only appear in Section \ref{sec:semantics}.
\item In addition to labelled sum and lazy record types, we have dependent eager pair, unit, and function types that seem asymmetrically presented---we elaborate on this later.
\item With the exception of type annotation and definition calls, which belong to an ambient signature of typed mutually recursive definitions, each process corresponds to a judgmental or logical rule in the semi-axiomatic sequent calculus. Note that $\overline{x:\A}$ is a \emph{telescope} \cite{DeBruijn1991}, i.e., a context where each subsequent binding may refer to previously bound variables.
\end{itemize}
Now, the following sub-sections elaborate on the language, starting with the assertion logic, judgmental rules, logical rules, and ending with recursive definitions.
\subsection{Assertion Logic}\label{sec:drsax-assertions}
Assertions, given by the grammar below, are drawn from the (classical) first-order theory of equality with uninterpreted functions. The ellipsis indicates the potential for extension to effect richer verification---for example, including the theory of arithmetic enables termination checking \cite{Somayyajula2022fscd}.
\begin{empheq}[box=\fbox]{align*}
\phi,\psi&:=\bot\mid\top\mid M\equiv N\mid\is_k(M)\mid\phi\land\psi\mid\phi\implies\psi\mid\,\forall x.\,\phi(x)\mid\ldots\\
M,N&:=s\mid\underbrace{\pair{}}_{\mathbf{1}\R}\mid\underbrace{\pair{M,N}}_{{\otimes}\R}\mid\underbrace{M\quo N}_{{\to}\L}\mid\underbrace{k\cdot M}_{{\oplus}\R}\mid\underbrace{M{.}k}_{{\&}\L}
\end{empheq}
We approximate process dynamics by a careful definition of first-order \emph{terms} $M,N$. First, the indirection introduced by addresses is collapsed by treating address variables as term variables and runtime addresses as nullary function symbols (\emph{not} constants, since unequal addresses do not necessarily have unequal referents). Thus, an address $s$ pointing to SAX (co)data denoted by $M$ is represented by the assertion $s\equiv M$. Finally, each axiom is assigned an uninterpreted function:
\begin{itemize}
\item\emph{Positive right axioms}: $\pair{}$ is a unit value, $\pair{M,N}$ is a pair of values $M$ and $N$, and $k\cdot M$ is a $k$-tagged value $M$. These function symbols are additionally subject to the first-order theory of \emph{non-cyclic} data structures \cite{Oppen1978popl}. Note that even with recursive types, values cannot be cyclic, because a non-allocating process cannot write to and read from the same address. For convenience, we assume the availability of the assertion $\is_k(M)$ that is true when $M$ is a $k$-tagged value.
\item\emph{Negative left axioms}: $M\quo N$ represents an application of the SAX function denoted by $M$ to argument $N$ and $M.k$ is the $k$\textsuperscript{th} projection of the record denoted by $M$. Note our use of the phrase ``the [continuation] denoted by $M$''---we do \emph{not} directly encode function bodies into the assertion logic, as that could reveal information hidden by negative type refinements discussed in Sections \ref{sec:drsax-additives} and \ref{sec:drsax-deptypes}. Instead, a continuation addressed by $s$ is abstractly described by assertions about $s\quo N$ or $M.k$---similar to \emph{copattern matching} \cite{Abel2013popl}.
\end{itemize}
\subsection{Phase Change: Subsumption and Type Annotation}
Analogous to natural deduction, changes of phase between inputs and outputs are mediated by \emph{subsumption} ($\leq$R/L) and \emph{type annotation} (\textsc{Anno}R/L). The judgmental distinction between the left- and right-hand sides of the sequent require two rules each \cite{Lengrand2006csl}.
\begin{empheq}[box=\fbox]{gather*}
\leqR\qquad\leqL\\
\annoR\qquad\annoL
\end{empheq}
Subsumption combines covariant succedent and contravariant antecedent subtyping with \emph{postcondition strengthening} and \emph{precondition weakening}, analogous to the consequence rule in Hoare logic \cite{Jhala2021}. The subtyping rules in Figure \ref{fig:subtyping-rules} generalize polarized subtyping (which includes \emph{width} and \emph{depth} subtyping for sums and records \cite{Lakhani2022esop}) to internalize type dependency \cite{Aspinall1996lics}. In particular:
\begin{itemize}
\item Introducing the auxiliary judgment $\Gamma\vdash\A\leq\B$, ${\leq}$\textsc{Pred} corresponds to the standard \emph{predicate subtyping} \cite{Rushby1998tse} rule at the core of type refinement systems \cite{Jhala2021}.
\item The $\infty$ sign surrounding the premises of ${\leq}$\textsc{RecR/L} indicates a coinductive occurrence of the subtyping judgment (with all other ones being inductive) \cite{Danielsson09,Danielsson2010mpc}; \emph{ops.~cit.}~themselves build on coinductive axiomatizations of subtyping \cite{Brandt1997tlca}. That is, a subtyping derivation is a (potentially) infinitely deep tree where every infinite branch passes through an instance of this rule infinitely many times, representing the unfolding of a recursive type. See Example 6 in \cite{Lakhani2022esop} for one such derivation.
\end{itemize}
To finish, we verify that the subtyping relation is indeed reflexive and transitive via mixed induction and coinduction.
\begin{rem}[Mixed Induction and Coinduction]
Proofs by ``mixed induction and coinduction'' over the structure of (sub)typing derivations involve a lexicographic guarded coinduction to prove judgments marked $\infty$ prioritized over a structural induction on (smaller) subderivations (i.e., when guardedness does not change). Refer to \cite{Danielsson09} for further examples.
\end{rem}
\begin{lem}[Reflexivity and Transitivity of Subtyping]
\hphantom{}
\label{lem:subtyping}
\begin{itemize}
\item Reflexivity: $\Gamma\vdash A\leq A$
\item Transitivity: if $\Gamma\vdash A\leq B$ and $\Gamma\vdash B\leq C$, then $\Gamma\vdash A\leq C$
\end{itemize}
\end{lem}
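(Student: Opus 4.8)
The plan is to prove both parts together by the mixed induction--coinduction scheme of the preceding remark: an outer guarded coinduction that discharges the $\infty$-marked premises of ${\leq}$\textsc{RecR/L}, wrapped around an inner structural induction handling every other (inductive) rule. Because the structural rules for $\otimes$, $\to$, and $\&$ have refined components, I would prove each statement simultaneously with its analogue for the auxiliary judgment $\Gamma\vdash\A\leq\B$. Reflexivity is the warm-up. I would derive $\Gamma\vdash A\leq A$ by induction on the structure of $A$, with coinduction taking over at recursive types. For $\mathbf 1$ the matching axiom applies directly; for $\oplus$, $\&$, $\otimes$, and $\to$ I apply the corresponding subtyping rule and appeal to the inductive hypothesis on each component (for the dependent pair and function types, reflexivity of the codomain is taken under the extended context). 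For a refined type $A\mid\lambda x.\,\phi(x)$ I use ${\leq}$\textsc{Pred}, whose predicate obligation $\Gamma,x:A\vdash\phi(x)\implies\phi(x)$ is immediate in the assertion logic and whose type premise is the inductive hypothesis on $A$. The sole coinductive case is a recursive type $X=A$: I apply ${\leq}$\textsc{RecR} and ${\leq}$\textsc{RecL} to unfold both occurrences and invoke the coinductive hypothesis once $X$ recurs, the call being licensed because it is guarded by the emitted ${\leq}$\textsc{Rec} steps.

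Transitivity is the substantial part. Given derivations $\mathcal D::\Gamma\vdash A\leq B$ and $\mathcal E::\Gamma\vdash B\leq C$, I would again work by guarded coinduction, productively emitting the output derivation of $A\leq C$ with each ${\leq}$\textsc{Rec} step serving as a guard, over an inner induction on the combined size of $\mathcal D$ and $\mathcal E$. This inner measure is well-founded precisely because every rule other than ${\leq}$\textsc{RecR/L} is inductive, so between two guard points each derivation is finite. The proof then proceeds by case analysis on the last rules of $\mathcal D$ and $\mathcal E$. The routine cases are those in which both derivations decompose the middle type $B$ with matching logical rules (both $\oplus$, both $\&$, both $\otimes$, both $\to$): the premises line up through $B$'s components and label sets, so I recurse via the inner hypothesis on the strictly smaller subderivations and re-emit the corresponding rule. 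The refined cases apply ${\leq}$\textsc{Pred} on both sides and close the predicate obligation by transitivity of implication, composing $\phi\implies\psi$ with $\psi\implies\chi$ to obtain $\phi\implies\chi$.

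The delicate cases, and the expected main obstacle, concern recursive types in the middle position. When $B$ is recursive and $\mathcal D$ ends in ${\leq}$\textsc{RecR} while $\mathcal E$ ends in ${\leq}$\textsc{RecL}, both premises speak about the same unfolding of $B$, so I combine them and appeal to the \emph{coinductive} hypothesis, discharging the guard with a ${\leq}$\textsc{Rec} step in the output. The genuinely awkward situations are the commuting cases, where one derivation ends in a ${\leq}$\textsc{Rec} rule that unfolds $A$ or $C$ (rather than the middle $B$) while the other ends in an unrelated rule: here the ${\leq}$\textsc{Rec} step must be permuted to the root of the constructed derivation and the remainder recursed upon. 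The challenge is to argue that this permutation simultaneously (i) preserves productivity, so that every infinite branch of the output still crosses a ${\leq}$\textsc{Rec} node infinitely often, and (ii) strictly decreases the inner measure whenever no guard is emitted, so that the coinductive and inductive recursions do not conflict. I would resolve this exactly as in the coinductive-subtyping literature \cite{Brandt1997tlca,Danielsson09}: arrange the case split so that a coinductive step is taken only when forced by a recursive middle type, guaranteeing that the emitted guards are productive while the intervening inductive reasoning stays well-founded.
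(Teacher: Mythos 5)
Your reflexivity argument and your top-level plan for transitivity match the paper's proof, which indeed proves reflexivity by guarded coinduction prioritized over structural induction on $A$, and transitivity by a simultaneous mixed induction and coinduction on the two derivations. The gap is in the one case that makes transitivity non-routine: when the middle type $B$ is a recursive type name $X$, so that $\mathcal{D}$ ends in the Rec rule unfolding its right-hand type and $\mathcal{E}$ ends in the Rec rule unfolding its left-hand type. You propose to combine the two premises and ``appeal to the coinductive hypothesis, discharging the guard with a ${\leq}$\textsc{Rec} step in the output.'' But no Rec step can be emitted there: both Rec rules require the \emph{conclusion} to have a type name on one side, and the output judgment $\Gamma\vdash A\leq C$ involves only the outer types, which in general are not names. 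Concretely, with $X=\mathbf{1}$ and derivations of $\mathbf{1}\leq X$ and $X\leq\mathbf{1}$, the goal $\mathbf{1}\leq\mathbf{1}$ admits no Rec rule at all. So the corecursive call is unguarded; and it is not covered by your inner induction either, since the premises $A\leq B'$ and $B'\leq C$ (where $X=B'$) lie \emph{behind} the $\infty$ markers of $\mathcal{D}$ and $\mathcal{E}$, hence are not smaller in any measure on their well-founded parts. Your plan has no licensed move in exactly the case that is the crux of the lemma. Relatedly, your assessment of where the difficulty lies is inverted: the ``commuting'' cases where a Rec rule unfolds the outer $A$ or $C$ are the easy ones, because emitting that Rec rule at the root of the output \emph{is} the guard, after which the corecursive call on the remaining pair is licensed outright; no delicate permutation argument is needed there.

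What the missing case needs is an inner measure on \emph{types} rather than on derivations. Assuming contractive definitions ($X=A$ with $A$ not itself a name), one recurses unguardedly from $(\mathcal{D},\mathcal{E})$ to the pair of $\infty$-premises, justified because the outer types $A,C$ are unchanged while the middle passes from a name to its unfolding, which is not a name; at the next step the middle is structural, so either an outer Rec rule appears (emitting a guard) or both derivations end in matching structural rules, whose recursive calls strictly shrink the finite syntax of the outer types. A lexicographic measure of the form (guardedness, then size of the outer types, then whether the middle is a name) thus terminates between guards and guarantees that every infinite branch of the constructed derivation crosses a Rec rule infinitely often, i.e., that the output is a well-formed mixed inductive-coinductive derivation; this is the kind of argument found in \cite{Brandt1997tlca,Danielsson09}. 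Your closing sentence, by contrast, says a coinductive step is taken ``only when forced by a recursive middle type,'' which is precisely the situation in which no guarded step exists.
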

\begin{proof}
The first part is by a lexicographic combination of guarded coinduction to prove any instances of the $\infty$-marked subtyping judgment prioritized over structural induction on $A$, and the second is a straightforward simultaneous mixed induction and coinduction on the both derivations.
\end{proof}
\begin{figure}
\centering
\begin{empheq}[box=\fbox]{gather*}
\infer[$\leq$Pred]{\Gamma\vdash A\leq B \\ \Gamma,x\To A\mid\phi(x)\vdash\psi(x)}{\Gamma\vdash(A\mid\phi(\cdot))\leq(B\mid\psi(\cdot))}\quad\infer{X=A\\\infty(\Gamma\vdash A\leq B)}{\Gamma\vdash X\leq B}\quad\infer{X=A\\\infty(\Gamma\vdash B\leq A)}{\Gamma\vdash B\leq X}\\
\infer[${\leq}{\to}$]{\Gamma\vdash\A\leq\A' \\ \Gamma,x\To\A\vdash \B(x)\leq\B'(x)}{\Gamma\vdash(x:\A')\to \B(x)\leq(x:\A)\to \B'(x)}\quad\infer[${\leq}{\unit}$]{}{\Gamma\vdash\unit\leq\unit}\quad\infer[${\leq}{\otimes}$]{\Gamma\vdash\A\leq\A' \\ \Gamma,x\To\A\vdash C(x)\leq D(x)}{\Gamma\vdash(x:\A)\otimes C(x)\leq(x:\A')\otimes D(x)}\\
\infer[${\leq}{\oplus}$]{S\subseteq T\\\{\Gamma\vdash A_\ell\leq B_\ell\}_{\ell\in S}}{\Gamma\vdash{\oplus}\{\ell:A_\ell\}_{\ell\in S}\leq{\oplus}\{\ell:B_k\}_{k\in T}}\quad\infer[${\leq}{\&}$]{T\subseteq S\\\{\Gamma\vdash\A_k\leq\B_k\}_{k\in T}}{\Gamma\vdash{\&}\{\ell:\A_\ell\}_{\ell\in S}\leq{\&}\{\ell:\B_k\}_{k\in T}}
\end{empheq}
\caption{Subtyping}
\label{fig:subtyping-rules}
\end{figure}
\subsection{Cut, Snips, and Identity}
The process behind the cut rule forms the core of computation with futures in (DR)SAX: $x\from P(x);Q(x)$ spawns $P$ to perform a non-blocking write to a newly allocated future addressed by $x$ while concurrently proceeding as $Q$, which may perform a blocking read from $x$. We give two forms of the cut rule depending on which premise outputs the cut ``formula'' $\A$. Thinking of $\A$ as a \emph{midcondition}, cuts are analogous to the composition rule in Hoare logic.
\begin{empheq}[box=\fbox]{gather*}
\snipplus\\
\snipneg
\end{empheq}
In the absence of annotations, these rules actually correspond to \emph{snips} in SAX: \emph{analytic cuts} \cite{Smullyan1969jsl} where $\A$ is a subformula of an axiom's principal formula (hence the rule names). 
%
%
%
Likewise, the identity rule $y\from x$, which copies the contents of $x$ to $y$, comes in two forms depending on where the principal ``formula'' $\A$ is outputted:
\begin{empheq}[box=\fbox]{gather*}
\idplus\qquad\idneg
\end{empheq}
%
%
%
%
\subsection{Labelled Sums and Lazy Records}\label{sec:drsax-additives}
Let us pick up where we left off with labelled sums and lazy records in Section \ref{sec:drsax-judgments}. Recalling our new judgmental structure explicitly indicating the flow of type information, the right/introduction rule for $\oplus\{\ell:A_\ell\}_{\ell\in S}$ corresponds to the following right axiom.
\[\infer[{$\oplus$R/I, $k\in S$}]{\Gamma\vdash x\From A_k}{\Gamma\vdash y\From\oplus\{\ell:A_\ell\}_{\ell\in S}}\quad\rightsquigarrow\quad\fbox{\oplusR}\]
The process $y.k\cdot x$ writes the tagged value $k\cdot x$ to the future addressed by $y$. As a result, this rule can be viewed as an instance of the assignment rule in Hoare logic. Indeed, the postcondition flows from right to left, becoming a precondition by a suitable substitution. Thus, the type itself need not embed any type refinements. Now, the corresponding left rule is inherited from the sequent calculus:
\begin{flushleft}
$\infer[$\oplus$L]{\{\Gamma,x\To A_k,y\To\oplus\{\ell:A_\ell\}_{\ell\in S}\vdash(z\From C)\}_{k\in S}}{\Gamma,y\To\oplus\{\ell:A_\ell\}_{\ell\in S}\vdash(z\From C)}~\rightsquigarrow$
\end{flushleft}
\begin{flushright}
$\fbox{\oplusL}$
\end{flushright}
The process $\Case y\,\{\ell\cdot x\To P_\ell(x)\}_{\ell\in S}$ proceeds by cases of the tagged address stored in $y$. Adding refinements requires some care: the antecedent $x$ inherits $\phi(k\cdot x)$ to be in \emph{harmony} \cite{Tennant1978} with the right rule above. Thus, $y$ is additionally subject to $y\equiv k\cdot x$ to indicate its relationship to $x$ when typing each case branch $P_k(x)$. This strong form of path sensitivity is necessary to complete the following example.
\begin{exmp}[Negation]
Letting $\bool=\oplus\{\true:\unit,\false:\unit\}$, we can define Boolean negation of $x$, storing the result in $y$, as $P\triangleq\Case x\,\{\true\cdot x'\To y.\false\cdot x',\false\cdot x'\To y.\true\cdot x'\}$. Then, the judgment $x\To\bool\vdash P\div(y\From\bool\mid\phi(x,y))$ is derivable where $\phi(x,y)\triangleq(\is_\true(x)\implies\is_\false(y))\land(\is_\false(x)\implies\is_\true(y))$. In the first branch, for example, the critical point is when $x'\To\bool$ meets $x'\From\bool\mid\phi(x,\false\cdot x')$ via subsumption---the assumption that $x\To\bool\mid x\equiv\true\cdot x'$ is essential.
\end{exmp}
To develop the lazy record type $\&\{\ell:\A_\ell\}_{\ell\in S}$, we again refine its ordinary right/introduction rule:
\[\infer[{$\&$}R/I]{\{\Gamma\vdash x\From A_\ell\}_{\ell\in S}}{\Gamma\vdash y\From\&\{\ell:A_\ell\}_{\ell\in S}}~\rightsquigarrow~\fbox{\withR}\]
In this case, the process $\Case y\,\{\ell\cdot x\To P_\ell(x)\}_{\ell\in S}$ writes a \emph{destination-passing} lazy record to $y$, where its $\ell$\textsuperscript{th} projection writes to the $x$ provided. Destination-passing style is key to our asynchronous operational semantics, as a client of $y$ should be able to refer to $x$ even if it has not yet been populated by $P_\ell(x)$. Now, since the succedents of the premises are inputs in the original rule, they must each be handed a new postcondition; hence each $A_\ell$ becomes $A_\ell\mid\phi_\ell(\cdot)$. The twist is that $\psi(y)$ is verified directly by assuming that there is some $y$ subject to $\phi_\ell(y.\ell)$ for each $\ell$ (the ellipsis repeats the record type). As we mentioned in Section \ref{sec:drsax-assertions}, this respects encapsulation of the record by not reifying its contents into the assertion logic. In particular, intensional properties about the record \emph{cannot} be verified if $\phi_\ell$ hides them, i.e., does not mention them. Let us work through a small example to demonstrate.
\begin{exmp}[Record Encapsulation]
Let $P\triangleq\Case y\,\{\fst\cdot x\To z\from z.\pair{};x.\true\cdot z\}$. Then the judgment $\cdot\vdash P\div(y\From\&\{\fst:\bool\mid\is_{\true}\}\mid\is_{\true}(y.\fst))$ is derivable, but $\cdot\vdash P\div(y\From\&\{\fst:\bool\}\mid\is_{\true}(y.\fst))$ is not.
\end{exmp}
Now, we produce the corresponding refined left axiom below, following our preliminary development in Section \ref{sec:drsax-judgments}. We type the process $y.k\cdot x$, which allows the $k$\textsuperscript{th} projection of $y$ to populate $x$.
\[\infer[\&E, {$k\in S$}]{\Gamma\vdash y\To\&\{\ell:A_\ell\}_{\ell\in S}}{\Gamma\vdash x\To A_k}\quad\rightsquigarrow\quad\fbox{\withL}\]
While it is tempting to let $x$ also be subject to $x\equiv y.k$ as an analogously strong form of path sensitivity, it would not be type-sound, because that relationship is not made explicit in the typing of the right rule (only in the verification of $\psi(y)$). As a result, we produce the following non-example.
\begin{exmp}[Failure of Swap]
We define the following process $P$ that swaps the components of a record $p$ and stores the result in $q$: $P\triangleq\Case q\,\{\fst\cdot x\To p.\snd\cdot x,\snd\cdot y\To p.\fst\cdot y\}$. Then, the following judgment is \emph{not} derivable:
\[p\To\&\{\fst:A,\snd:B\}\vdash P\div(q\From\{\fst:B\mid\lambda x.\,x\equiv p.\fst,\snd:A\mid\lambda y.\,y\equiv p.\snd\})\]
Given that we are already working in the presence of non-termination, a larger set of effects occurring at negative type \cite{Levy1999tlca} may invalidate this kind of equality anyways.
%
\end{exmp}
\subsection{Dependent Types}\label{sec:drsax-deptypes}
We now turn our attention to dependent eager pair $(x:\A)\otimes B(x)$ and function $(x:\A)\to\B(x)$ types. Following our development of the labelled sum type, we convert the right/introduction rule to a right axiom. We once again observe that flowing type information bottom up corresponds to a right-to-left flow.

\noindent
$\infer[{$\otimes$}R/I]{\Gamma\vdash x\From A \\ \Gamma\vdash y\From B(x)}{\Gamma\vdash z\From (x:A)\otimes B(x)}~\rightsquigarrow$
\begin{flushright}
\fbox{\otimesR}
\end{flushright}
The process $z.\pair{x,y}$ writes the pair of $x$ and $y$ to $z$. Sensing that this rule too resembles an instance of the assignment rule in Hoare logic, we can now explain the asymmetry between both conjuncts: while $\phi$ flows to the pair's first component $x$, its second component $y$ inherits $\psi$ by substitution of $\pair{x,y}$ for $z$. Like sums, the left rule ${\otimes}$L follows from harmony with its right axiom. Refer to Figure \ref{fig:proc-ty} for this rule as well as those for the unit type. 
%
%
Path sensitivity enables the following example, which was unavailable for lazy records. Note that we use the usual shorthand $A\otimes B$ for non-dependent pairs.
\begin{exmp}[Swap]
Let $P\triangleq\Case z\,\{\pair{x,y}\To w.\pair{y,x}\}$ be a process that swaps a (non-dependent) pair addressed by $z$ and writes it to $w$. Then, the following judgment is derivable:%
\[z\To A\otimes B\vdash P\div(w\From B\otimes A\mid\forall x,y.\,z\equiv\pair{x,y}\implies w\equiv\pair{y,x})\]
\end{exmp}
Following our approach for lazy records, the refined right rule for the dependent function type is as follows.
%
%
%
\begin{empheq}[box=\fbox]{gather*}
\toR
\end{empheq}
Like that for lazy records, the process $\Case z\,(\pair{x,y}\To P(x,y))$ writes a destination-passing function to $z$ whose body is $P(x,y)$ where $x$ refers to the argument source and $y$ the result destination. Since $x$ and $y$ take assertions as inputs in the premise, function types include a precondition $\phi$ on $x$ and a postcondition $\psi$ on $y$. As with lazy records, the postcondition $\chi$ on $z$ is verified directly from $\phi$ and $\psi$, leaving the function body encapsulated (again, the ellipsis repeats the function type). Let us look at an example to interrogate abstraction boundaries.
\begin{exmp}[Left Unit of Addition]
\label{ex:rua-i}
Let $\nat=\oplus\{\zero:\unit,\succ:\nat\}$. In Example \ref{ex:rua-ii}, we define $\add(x:\nat,y:\nat,z:\nat\mid x+y\equiv z)$ by induction on $x$, assuming that the uninterpreted function $(+)$ is subject to the appropriate axioms. Now, we package this definition into a process $P$ writing to $w$ with a pair of arguments $p$ and result $z$:
\[P\triangleq\Case w\,(\pair{p,z}\To\Case p\,(\pair{x,y}\To\add(x,y,z)))\]
Then, the following judgment is derivable, which asserts that the left unit of addition is zero by applications to $w$. Note that proving zero as the right unit of addition would require a separate induction on $x$.
\begin{align*}
\cdot\vdash P\div(w\From(p:\nat\otimes\nat)\to(\nat\mid\lambda z.\,\forall x,y.\,p\equiv\pair{x,y}\implies z\equiv x+y)\mid\forall x,y.\,\is_{\zero}(x)\implies w\quo\pair{x,y}\equiv y)
\end{align*}
However, the following judgment is \emph{not} derivable, because the action of addition is hidden by the function's output refined type.
\begin{align*}
\cdot\vdash P\div(w\From(p:\nat\otimes\nat)\to(\nat\mid\lambda z.\,\forall x,y.\,p\equiv\pair{x,y}\implies\is_{\zero}(x)\implies y\equiv z)\mid\forall x,y.\, w\quo\pair{x,y}\equiv x+y)
\end{align*}
Note that these functions are uncurried to avoid having to refine each intermediate function type with the necessary information to prove the final postcondition.
\end{exmp}
Finally, it remains to convert the dependent elimination rule to a left axiom. The former outputs both $A$ and $B$ from top down, forcing the second premise to take $A$ as an input. Likewise in the latter, $\A$ flows to the same side (left) of the sequent, but $\B$ flows to the right.

\noindent
$\infer[{$\to$}E]{\Gamma\vdash z\To(x:A)\to B(x) \\ \Gamma\vdash x\From A}{\Gamma\vdash y\To B(x)}~\rightsquigarrow$%
\begin{flushright}
\fbox{\toL}
\end{flushright}
In this case, the process $z.\pair{x,y}$ passes the argument $x$ and result destination $z$ to the function addressed by $z$. Like lazy records, this left rule also omits strong path sensitivity, recalling our discussion of type soundness and a further integration of effects.
\subsection{Recursion: Assume-Guarantee Reasoning and Recursion Invariants}
Following Lakhani et al.~\cite{Lakhani2022esop}, a definition call outputs its ascribed type signature when the definition body checks against the same signature. To type recursive definitions, we take a mixed inductive-coinductive view of the typing judgment as we did with subtyping with the rule below.
\begin{empheq}[box=\fbox]{gather*}
\call
\end{empheq}
Thus, when a recursive definition is checked against its type signature, recursive calls coinductively produce the typing derivation computed so far, corresponding to \emph{assume-guarantee reasoning} that is standard for both program logics and typing recursion \cite{Brotherston,Derakhshan19arxiv,Jhala2021}. In particular, it seems to be the syntactic reflection of coinductively-defined partial correctness (on which we elaborate in the next section); this connection has been explored by Bell and Chlipala \cite{Bell2016coqpl}. We reproduce Example 22 in \cite{Somayyajula2022fscd} below to show the exact mechanics of this process.
\begin{exmp}[Typing Derivation Circularity]
Recalling $\nat=\oplus\{\zero:\unit,\succ:\nat\}$, the following process definition performs a trivial induction on a natural number and returns unit:
\[\eat(x:\nat,y:\unit)=\Case x\,\{\zero\cdot x'\To y\from x',\succ\cdot x'\To\eat(x',y)\}\]
The typing derivation for its body is as follows (process terms are omitted for space) where $\dagger$ denotes a circular edge and $W$ stands for antecedent weakening.
\begin{gather*}
\infer*[left={\(\oplus\)L}]{\infer*[left={$\leq$L}]{\infer*[left=id\textsuperscript{+}]{ }{\ldots,x'\From\unit\vdash y\From\unit}}{x'\To\unit,x\To\nat\mid x\equiv\zero\cdot x'\vdash y\From\unit}\\\infer*[left={$\leq$R,$\leq$L,W}]{\infer*[left=call]{\infty(x'\To\nat\vdash y\From\unit)\quad\dagger}{x'\From\nat\vdash y\To\unit}}{x'\To\nat,x\To\nat\mid x\equiv\succ\cdot x'\vdash y\From\unit}}{x\To\nat\vdash y\From\unit\quad\dagger}
\end{gather*}
\end{exmp}
Notice above that \emph{because} definition calls output, there is a mandatory change of phase between checking the body and the call, which may strengthen the postcondition and weaken the preconditions. This is analogous to checking whether the \emph{loop invariant} implies the postcondition in the loop rule in Hoare logic. As demonstrated in the examples below, our formulation treats induction, coinduction, and mixed induction and coinduction uniformly.
\begin{exmp}[Addition]
\label{ex:rua-ii}
Recall from Example \ref{ex:rua-i} that we will define addition where $\nat=\oplus\{\zero:\unit,\succ:\nat\}$ and $(+)$ is subject to the axioms $\forall x,y.\,\zero\cdot x+y\equiv y$ and $\forall x,y.\,\succ\cdot x+y\equiv\succ\cdot(x+y)$.
\begin{align*}
\add(x:\nat,y:\nat,z:\nat&\mid z\equiv x+y)=\\
&\Case x\,\{\zero\cdot x'\To z\from y,\succ\cdot x'\To z'\from\add(x',y,z');z.\succ\cdot z'\}
\end{align*}
Of significance is checking the snip in the $\succ$ branch: $\add(x',y,z')$ flows $z'\To\nat\mid z'=x'+y$ (the induction hypothesis from typing circularity) to the right but $z.\succ\cdot z'$ flows $z'\From\nat\mid\succ\cdot z'\equiv x+y$ to the left (the induction step). Path sensitivity gives us $x\To\nat\mid x\equiv\succ\cdot x'$, resolving the tension by subsumption.
\end{exmp}
\begin{exmp}[Nonzero Lazy Streams]
In type refinement systems, (co)inductive invariants are typically folded into refinements \cite{Mastorou2022haskell} in lieu of being defined as separate (co)predicates \cite{Momigliano2003types,Leino2014fm}. For example, if $\str=\&\{\head:\nat,\tail:\str\}$ classifies natural number streams, then $\sstr=\&\{\head:\nat\mid\is_\succ,\tail:\sstr\}$ classifies those that are pointwise nonzero. We can check the following definition, which shows that a certain increasing stream starting from a nonzero number is pointwise nonzero.
\[\up(x:\nat\mid\is_{\succ},y:\sstr)=\Case y\,\{\head\cdot h\To h\from x,\tail\cdot t\To x'\from x'.\succ\cdot x;\up(x',t)\}\]

Considering the body of $\up$ as the coinduction step, the coinduction hypothesis (i.e., that the tail is pointwise nonzero) is implicitly part of $\sstr$ as outputted by the recursive call to populate $t$.
\end{exmp}
\begin{exmp}[Left-Fair Streams]
We can extend the technique from the previous example to operate on mixed inductive-coinductive data structures. For example, consider the type below of \emph{left-fair streams} \cite{Basold2018} where, assuming termination, consecutive elements of type $\A$ are interspersed with finitely many timeout ($\later$) labels.
\[\lfair=\oplus\{\now:\&\{\head:\A,\tail:\lfair\},\later:\lfair\}\]
We will define a \emph{projection} operation that is guaranteed to clear these labels, producing the underlying stream. For the sake of this example, we define (later-less) streams as the following recursively refined record:
\[\str=\&\{\fst:\oplus\{\now:\&\{\head:\A,\tail:\str\},\later:\str\}\mid\is_{\now}\}\]
In the definition below, the desired invariant is implicitly checked by coinduction to construct the stream prioritized over induction to vacate the $\later$ labels.
\begin{align*}
\proj(x:\lfair,y:\str)=&\Case y\,\{\fst\cdot l\To\\
&\Case x\,\{\now\cdot s\To s'\from\Case s'\,\{&&\head\cdot h\To s.\head\cdot h,\\
& &&\tail\cdot y'\To x'\from s.\tail\cdot x';\proj(x',y')\};l.\now\cdot s',\\
&~~\quad\qquad\later\cdot x'\To\proj(x',y)\}\}
\end{align*}
Since this process definition is complex, we turn the reader's attention specifically to $l.\now\cdot s'$ which outputs $s'\From\&\{\ldots\}\mid\is_\now(\now\cdot s')$. Checking $s'$ using $\&$R on the left-hand side of the cut then directly verifies $\is_\now(\now\cdot s')$, as desired. Finally, the second recursive call trivially preserves the invariant.
\end{exmp}
We finish this subsection by commenting on the seemingly dangerous interaction between non-termination and type soundness.
\begin{rem}[Non-Termination]
In a cut, non-termination on the left allows the assumption of $\bot$ on the right. Thus, unrestricted lazy evaluation would be incompatible with type soundness, because an unused non-terminating computation can be discarded, exposing a potentially unsafe computation checked against $\bot$ \cite{Vazou2014icfp}. This is not an issue in DRSAX, as the futures-based (as opposed to \emph{speculations}-based \cite[Chapter 38]{Harper2016PFPL}) operational semantics defined in the next section does not discard computations.
\end{rem}
\subsection{Summary}
The process typing rules reviewed in the previous subsections are collected into Figure \ref{fig:proc-ty}.
\begin{figure}
\centering
\begin{empheq}[box=\fbox]{gather*}
\leqR\qquad\leqL\\
\annoR\qquad\annoL\\
\HRule\\
\snipplus\\
\snipneg\\
\idplus\qquad\idneg\\
\HRule\\
\unitR\qquad\unitL\\
\otimesR\\
\otimesL\\
\toR\\
\toL\\
\oplusR\\
\oplusL\\
\withR\\
\withL\\
\HRule\\
\call
\end{empheq}
\caption{Process Typing}
\label{fig:proc-ty}
\end{figure}
\section{Operational Semantics, Type Soundness, and Observable Partial Correctness}\label{sec:semantics}
In this section, we define typing and reduction for \emph{configurations} of DRSAX processes and the future cells with which they communicate. Then, we prove syntactic type soundness. As we alluded in the introduction, this entails \emph{observable} partial correctness, in which hereditarily non-encapsulated sub-configurations (of \emph{purely} positive type) satisfy their associated postconditions directly.%
\subsection{Configuration Reduction and Typing}
\begin{defn}[Configuration]
Configurations are multisets of process and cell objects defined by the following grammar.
\begin{align*}
\C&:=\cdot&&\text{empty configuration}\\
&\mid\,\proc{a}{P}&&\text{process}~P~\text{writing to cell addressed by}~a\\
&\mid\,\cell{a}{S}&&\text{persistent cell addressed by}~a~\text{with contents}~S:=V\mid K\\
&\mid\,\C,\C&&\text{join of two configurations}
\end{align*}
That is, the join and empty rules form a commutative monoid. A configuration $\F$ is \emph{final} when it only consists of cells.
\end{defn}
\begin{figure}
\centering
\begin{minipage}{.4\textwidth}
\begin{center}
\begin{align*}
\cell{a}{W},\proc{b}{(b\from a)}&\mapsto\,\cell{b}{W}\\
\proc{c}{(x\from P(x);Q(x))}&\mapsto\\&\hspace{-8em}\proc{a}{(P(a))},\proc{c}{(Q(a))}~\text{where}~a~\text{is fresh}\\
\cell{a}{K},\proc{c}{(a.V)}&\mapsto\proc{c}{(V\triangleright K)}\\
\cell{a}{V},\proc{c}{(\Case a\,K)}&\mapsto\proc{c}{(V\triangleright K)}\\
\proc{a}{(a\from f~\overline{b})}&\mapsto\proc{a}{(P_f(\overline{b},a))}\\
\proc{a}{(a.V)}&\mapsto\,\cell{a}{V}\\
\proc{a}{(\Case a\,K)}&\mapsto\,\cell{a}{K}
\end{align*}
\end{center}
\end{minipage}
~\vline~
\begin{minipage}{.3\textwidth}
\begin{center}
\begin{align*}
\pair{}\triangleright\pair{}\To P&=P\\
\pair{a,b}\triangleright(\pair{x,y}\To P(x,y))&=P(a,b)\\
k\cdot a\triangleright\{\ell\cdot x\To P_\ell(x)\}_{\ell\in S}&=P_k(a)
\end{align*}
\end{center}
\end{minipage}
\caption{Configuration Reduction}
\label{fig:opsem}
\end{figure}
\emph{Configuration reduction} $(\mapsto)$ is defined by \emph{multiset rewriting rules} \cite{Cervesato09} in Figure \ref{fig:opsem}, which replace any subset of a configuration matching the left-hand side with the right-hand side. $!$ indicates objects that persist across reductions. Now, because bidirectional typing would complicate configuration typing, we first define the corresponding \emph{non-bidirectional} process typing below.
\begin{defn}[Non-bidirectional Typing]
Let $x:\A,\ldots,y:\B(x,\ldots)\vdash P\div(z:\A(x,\ldots,y))$ be generated by rules identical to those in Figure \ref{fig:proc-ty}, but with $\To$ and $\From$ replaced by $(:)$ and \textsc{AnnoL/R} removed.
\end{defn}
As usual, we must verify that the bidirectional process typing is sound and complete with respect to the above.
\begin{lem}[Soundness and Completeness of Bidirectional Typing] Let $\abs{P}$ erase type annotations in $P$ and $\abs{J}$ turn $\To$ and $\From$ to $(:)$. Extending $\abs{\cdot}$ to $\Gamma$ in the obvious way, $\Gamma\vdash P\div J$ iff $\abs{\Gamma}\vdash\abs{P}\div\abs{J}$. 
\end{lem}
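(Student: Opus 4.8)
The plan is to prove the two implications separately, in each case by induction on the relevant typing derivation, using the same mixed inductive–coinductive principle that governs process typing: the guarded definition-call rule is handled coinductively while all other rules are treated by ordinary structural induction on subderivations. A preliminary observation that simplifies everything is that the auxiliary judgments are insensitive to the $\To/\From$ distinction: the subtyping judgment $\Gamma\vdash A\leq B$ and the assertion sequent $\Gamma\vdash\phi$ consult $\Gamma$ only for variable types and their attached predicates, never for their direction, so $\Gamma\vdash A\leq B$ holds iff $\abs{\Gamma}\vdash A\leq B$ holds, and likewise for assertions. Hence any subtyping or assertion premise appearing in a process-typing rule transfers verbatim under erasure.

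For soundness (the left-to-right implication) I would induct on the bidirectional derivation of $\Gamma\vdash P\div J$ and, for each rule, exhibit the non-bidirectional rule of Figure \ref{fig:proc-ty} that derives the erased judgment. The logical and judgmental rules translate directly once $\To$ and $\From$ are collapsed to $(:)$; in particular the two forms of cut and the two forms of identity each map onto the single corresponding erased rule, and $\leq$R/L map to themselves using the transfer of subtyping noted above. The only rules lacking a non-bidirectional image are \textsc{Anno}R/L: here the key point is that erasing the annotation from the process turns the premise and the conclusion of the rule into literally the same non-bidirectional judgment, so the step is elided and the inductive hypothesis on the premise already delivers the goal. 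Guardedness at the call rule is preserved because the reconstructed derivation keeps the same tree shape.

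For completeness (the right-to-left implication) I would induct on the non-bidirectional derivation of $\abs{\Gamma}\vdash\abs{P}\div\abs{J}$ and reconstruct a bidirectional derivation of the \emph{given} $P$. The reconstruction is driven by the syntax of $P$ together with the directional markers recorded in the given $\Gamma$ and $J$: each annotation subterm $(x:\A)\anno P'$ occurring in $P$ forces a use of \textsc{Anno}R/L whose premise is supplied by the inductive hypothesis on the derivation of $\abs{P'}$, and the choice between the two cut forms (respectively, the two identity forms) is dictated by which side $\Gamma,J$ record as the \emph{output} of the cut (principal) type. Any mismatch between the direction demanded by the surrounding context and the direction produced by a subterm is absorbed by the phase-change rules, namely $\leq$R/L (whose subtyping and assertion obligations are already discharged in the erased derivation) together with the \textsc{Anno} steps, which sit exactly at the annotation points of $P$.

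The main obstacle is this completeness direction, and within it the directional bookkeeping. I must verify, rule by rule, that the markers carried by the given $\Gamma$ and $J$ are consistent with the shape of the non-bidirectional derivation, so that a well-directed bidirectional rule is always available and the reinserted subsumption/\textsc{Anno} steps land precisely where the flow of type information reverses. This is delicate for the dependent rules ($\otimes$R, $\to$R/L, $\&$R/L), where the output-to-input substitutions that define the refined succedents and antecedents must be re-expressed relative to the chosen direction; and the whole reconstruction must remain guarded at the call rule, so that the resulting (possibly infinite) derivation is again a legitimate mixed inductive–coinductive object rather than a spurious non-well-founded tree.
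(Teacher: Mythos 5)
Your proposal is correct and follows essentially the same route as the paper's (very terse) proof: a mixed induction and coinduction on the typing derivation, where the forward direction erases \textsc{Anno}L/R steps (whose premise and conclusion coincide after erasure) and the backward direction reinserts \textsc{Anno}L/R at the annotation points of $P$, with subsumption handling the remaining phase changes. The additional bookkeeping you flag (transfer of subtyping/assertion premises under erasure, preservation of guardedness at the call rule) is exactly what the paper's word ``routine'' is quietly absorbing.
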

\begin{proof}
Both are a routine mixed induction and coinduction on the typing derivation; going forwards erases \textsc{AnnoL/R} and going backwards essentially inserts \textsc{AnnoL/R} as dictated by $P$.
\end{proof}
From now, $\Gamma$ and $\Delta$ refer to contexts associating runtime addresses to refined types. Thus, as a slight abuse of notation, we allow runtime addresses to stand in place of address variables in process typing. Finally, the \emph{configuration typing} judgment $\Gamma\vdash\C\div\Delta$ is inductively generated by the rules in Figure \ref{tab:config-ty}, which types the objects in $\C$ where sources are in $\Gamma$ and destinations in $\Delta$. The rules are designed to admit the following conveniences.
\begin{rem}[Proof Principles]
The theorems in the next subsection use the following proof principles.
\begin{itemize}
\item\emph{Right-to-left induction}: a configuration typing derivation $D$ can be viewed as a list of process typing derivations where readers of an address appear to the right of its writer. Thus, induction on $D$ isolates the rightmost derivation and applies the induction hypothesis to the sub-configuration on the left.
\item\emph{Inversion modulo subtyping}: following \cite{Das2018icfp}, the \textsc{Proc} rule contains subtyping ``slack'' premises on both sides of the sequent. Note that the premise $\Delta\leq\Gamma$ is defined by viewing $\Delta$ and $\Gamma$ as iterated dependent pair types. Thus, for inversion on the typing derivations for processes writing to and reading from the same address, it suffices to only consider the case where they end in right and left rule instances for the same type constructor, respectively. To see why, we first restrict our attention to process typing derivations ending in a non-subsumption rule instance, because terminal instances of subsumption can be absorbed into the ``slack'' using transitivity of subtyping (Lemma \ref{lem:subtyping}). Then, we observe that writers and readers ascribe types $A\leq B\leq\ldots$ to said address. Yet, if $A\leq B$, then $A$ and $B$ have the same head constructor modulo unfolding of recursive types.
\end{itemize}
\end{rem}
%
\begin{figure}
\centering
\begin{empheq}[box=\fbox]{gather*}
\infer[proc]{\Delta\leq\Gamma\\\Gamma\vdash P\div(a:\A)\\\Gamma\vdash\A\leq\B}{\Delta\vdash\proc{a}{P}\div(\Delta,a:\B)}\quad\infer[cellV]{\Gamma\vdash \proc{a}{(a.V)}\div\Delta}{\Gamma\vdash\cell{a}{V}\div\Delta}\quad\infer[cellK]{\Gamma\vdash\proc{a}{(\Case a\,K)}\div\Delta}{\Gamma\vdash\cell{a}{K}\div\Delta}\\
\infer[empty]{}{\Gamma\vdash\cdot\div\Gamma}\quad\infer[join]{\Gamma\vdash\C\div\Gamma'\\\Gamma'\vdash\C'\div\Delta}{\Gamma\vdash\C,\C'\div\Delta}
\end{empheq}
\caption{Configuration Typing}
\label{tab:config-ty}
\end{figure}
\subsection{Syntactic Type Soundness and Observable Partial Correctness}
Now that we have reviewed configuration reduction and typing, we prove syntactic type soundness by a standard appeal to progress and preservation. Then, we define and prove observable partial correctness.
\begin{theorem}[Progress]
If $\cdot\vdash\C\div\Delta$ then either $\C$ is final or $\C\mapsto\C'$ for some $\C'$.
\end{theorem}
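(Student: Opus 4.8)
The plan is to prove progress by the \emph{right-to-left induction} principle on the configuration typing derivation $\cdot\vdash\C\div\Delta$, dispatching the \textsc{empty} and \textsc{join} cases at the top level and reasoning about the rightmost object isolated by \textsc{join}. If $\C=\cdot$ then $\C$ is final. Otherwise $\C=\C_0,o$ with $\cdot\vdash\C_0\div\Gamma'$ and $\Gamma'\vdash o\div\Delta$, where $o$ is the rightmost process or cell. First I would invoke the induction hypothesis on $\C_0$: since reduction is multiset rewriting and hence closed under configuration context, if $\C_0\mapsto\C_0'$ then $\C=\C_0,o\mapsto\C_0',o$ and we are done. So it remains to treat the case where $\C_0$ is final, i.e.\ consists only of cells.

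With $\C_0$ final I would case on the rightmost object $o$. If $o$ is a cell, then $\C=\C_0,o$ is again all cells and hence final. If $o=\proc{a}{P}$, I case on the (annotation-erased) process $P$. The \emph{active} cases---a cut $x\from P'(x);Q(x)$, a definition call $a\from f\,\overline{b}$, and a write to the process's own destination ($a.V$ or $\Case a\,K$ with subject $a$)---reduce on their own by the corresponding rules of Figure \ref{fig:opsem}, so $\C$ steps immediately (here the assumption that $\C_0$ is final is not even needed). The remaining cases are precisely those in which $P$ \emph{reads} from a source $b\neq a$: an identity $a\from b$, a continuation application $b.V$, or a case analysis $\Case b\,K$.

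The hard part is exactly these reading cases, where I must rule out a stuck read. Since $b$ is a source of the rightmost process, $b\in\Gamma'$; and because $\C_0$ is final with destinations $\Gamma'$, every address in $\Gamma'$ is the address of some cell, so there is a cell $\cell{b}{S}\in\C_0$. It then remains to show that the contents $S$ have the shape the reader demands: a value when $b$ is consumed by identity or a positive-left use ($\Case b\,K$), and a continuation when $b$ is consumed by a negative-left use ($b.V$). This is where typing is essential, and I would appeal to \emph{inversion modulo subtyping}: the cell writing $b$ is typed through \textsc{cellV}/\textsc{cellK}, equivalently by a right-rule instance, while the process reads $b$ through a left rule; since the types the two ascribe to $b$ are related by subtyping, they share a head constructor modulo unfolding and hence the same polarity. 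Thus a value cell meets a positive-left reader and a continuation cell a negative-left reader, so the appropriate identity, application, or case reduction fires on the subset $\{\cell{b}{S},\proc{a}{P}\}$ and $\C$ steps; the identity case needs no polarity matching, since copying applies uniformly to $S:=V\mid K$.
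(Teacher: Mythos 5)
Your proof is correct and follows essentially the same route as the paper's: right-to-left induction on the configuration typing derivation, with the sub-configuration handled by the induction hypothesis, the cut/call/write cases stepping alone, and the reading cases resolved by inversion modulo subtyping to expose a cell of matching polarity. You merely spell out details the paper leaves implicit (the decomposition into \textsc{empty}/\textsc{join}, that a final context's addresses are all cell addresses, and that identity needs no polarity match), so no gap remains.
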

\begin{proof}
By right-to-left induction on the configuration typing derivation.
\begin{enumerate}
\item If $\C=\C_1,\cell{a}{S}$, then by the induction hypothesis, either $\C_1$ is final, in which case $\C$ is final, or $\C_1\mapsto\C_1'$, in which case $\C\mapsto\C_1',\cell{a}{S}$.
\item If $\C=\C_1,\proc{c}{P}$, then by the induction hypothesis, either $\C_1\mapsto\C_1'$, in which case $\C\mapsto\C_1',\proc{a}{P}$. Otherwise, $\C_1$ is final. If $P$ is a cut, definition call, or writes, then $\C$ steps by $P$ alone. Otherwise, inversion modulo subtyping on the appropriate subderivation in that for $\C_1$ reveals a cell of the right shape that $P$ reads from, letting $\C$ step.
\end{enumerate}
\end{proof}
\begin{theorem}[Preservation]
If $\Gamma\vdash\C\div\Delta$ and $\C\mapsto\C'$, then $\Gamma\vdash\C\div\Delta'$ for some $\Delta'\supseteq\Delta$.
\end{theorem}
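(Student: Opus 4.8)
The plan is to proceed by case analysis on the reduction rule from Figure \ref{fig:opsem} that justifies $\C\mapsto\C'$. Since $\mapsto$ is defined by multiset rewriting, the redex occupies some sub-multiset of $\C$; using the commutative monoid structure of configurations together with the right-to-left induction principle, I would first isolate the redex, threading the surrounding frame unchanged, so that in each case it suffices to retype only the contractum. Throughout, the two workhorses are the inversion-modulo-subtyping principle --- which lets me assume the relevant process and cell derivations end in matching right and left rules for a common type constructor, absorbing any terminal subsumption into the \textsc{proc} rule's slack premises via transitivity of subtyping (Lemma \ref{lem:subtyping}) --- and an address substitution (renaming) lemma that propagates runtime addresses through processes and their attached assertions.

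For the structural rules the argument is short. The two allocation steps $\proc{a}{(a.V)}\mapsto\cell{a}{V}$ and $\proc{a}{(\Case a\,K)}\mapsto\cell{a}{K}$ are essentially definitional, since \textsc{cellV}/\textsc{cellK} type a cell exactly by the typing of the corresponding writing process, so here $\Delta'=\Delta$. The copy step $\cell{a}{W},\proc{b}{(b\from a)}\mapsto\cell{b}{W}$ follows by inverting the identity rule to obtain $a:\A\leq\B$ for the declared destination type $\B$ of $b$, then retyping the content $W$ at $\B$ using subtyping, with the persistent source cell left in place. The cut step $\proc{c}{(x\from P(x);Q(x))}\mapsto\proc{a}{(P(a))},\proc{c}{(Q(a))}$ is where $\Delta$ genuinely grows: inverting the cut rule yields a midcondition $\A_{\mathrm{mid}}$ with $P$ writing $x:\A_{\mathrm{mid}}$ and $Q$ reading it; renaming $x$ to the fresh $a$ and assembling the two processes with \textsc{join} produces $\Delta'=\Delta,a:\A_{\mathrm{mid}}$ (up to the placement of $c$), which contains $\Delta$ as required.

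The definition-call step $\proc{a}{(a\from f\,\overline{b})}\mapsto\proc{a}{(P_f(\overline{b},a))}$ is handled by inverting the \textsc{call} rule, whose premise is precisely the (coinductively justified) typing of the body $P_f$ against $f$'s signature; substituting the actual sources $\overline{b}$ and destination $a$ for the formal parameters via the substitution lemma gives the typing of the contractum, with $\Delta'=\Delta$. The genuinely interesting cases are the two communication steps $\cell{a}{V},\proc{c}{(\Case a\,K)}\mapsto\proc{c}{(V\triangleright K)}$ and $\cell{a}{K},\proc{c}{(a.V)}\mapsto\proc{c}{(V\triangleright K)}$, for which I would prove a dedicated \emph{plugging lemma}: if a value cell and a matching continuation are well typed at a common positive, respectively negative, type, then passing the value to the continuation preserves typing. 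After inversion modulo subtyping aligns the value former with the continuation's relevant branch, each clause of $V\triangleright K$ reduces to an instance of the substitution lemma.

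The hard part will be reconciling the assertion-level information in these communication cases. In the positive case, the writer's right rule makes available exactly the path-sensitive equalities and preconditions (e.g.\ $a\equiv k\cdot b$ and $\phi(k\cdot b)$ for sums) that the reader's left rule assumed of its branch, so substituting $x:=b$ into $P_k$ must discharge precisely the hypotheses that the $\oplus$L rule introduced; the plugging lemma has to thread these facts through the assertion sequent faithfully. In the negative case the plugging lemma must be careful \emph{not} to rely on any analogous equality, since information under a codata type is encapsulated --- the very asymmetry that makes the swap non-example underivable --- so the continuation's obligations must already follow from its precondition alone. Managing this positive/negative asymmetry, together with the bookkeeping of subtyping slack on both sides of the sequent when composing the inverted derivations, is where I expect the real care to be needed; the remaining manipulation of contexts and assertions is routine.
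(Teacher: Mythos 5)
Your overall skeleton matches the paper's proof: induction on the reduction step, inversion modulo subtyping (absorbing terminal subsumptions into the slack premises via Lemma \ref{lem:subtyping}), with cuts, definition calls, writes, identity, and the negative-type communication step all handled as routine. The genuine gap is in the positive communication case $\cell{a}{V},\proc{c}{(\Case a\,K)}\mapsto\proc{c}{(V\triangleright K)}$, and it sits exactly where your framing breaks down. You assert that the writer's right rule ``makes available'' the path-sensitive equality $a\equiv k\cdot b$ assumed by the $k$\textsuperscript{th} branch of $\oplus$L, and that consequently the surrounding frame can be threaded unchanged with only the contractum retyped. Neither claim holds. The rule $\oplus$R never records that equality anywhere: it implements assignment-style substitution, flowing $\phi(k\cdot b)$ to $b$ as a precondition, so the hypothesis $a\equiv k\cdot b$ on which the typing of the branch $P_k(b)$ depends is simply absent from the configuration context; moreover the \textsc{proc} rule's slack premise $\Delta\leq\Gamma$ points the wrong way to conjure it, since the available ascription $\phi(a)$ does not imply $\phi(a)\land a\equiv k\cdot b$. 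A purely local plugging lemma of the form you propose is therefore not provable.

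The paper's resolution is global and exploits persistence of cells: the ascription of $a$ in the configuration typing is \emph{strengthened} to $\phi(a)\land a\equiv k\cdot b$. This is legitimate because the persistent cell $\cell{a}{k\cdot b}$ can be retyped against the strengthened postcondition --- $\oplus$R then demands of $b$ the precondition $\phi(k\cdot b)\land k\cdot b\equiv k\cdot b$, which is subsumed by $\phi(k\cdot b)$ via $\leq$L, so the ascription seen by readers of $b$ is unchanged. But this strengthening forces exactly the frame surgery you ruled out: every \emph{other} reader of $a$ now sees a stronger antecedent, and their derivations must be inductively repaired through their left slack premises, using the fact that $\phi(a)\land a\equiv k\cdot b$ implies $\phi(a)$. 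So the ``real care'' you anticipated lies not in threading facts through a local plugging lemma but in recognizing that preservation for path-sensitive matching is not frame-preserving at the level of derivations; the $\Delta'$ in the statement must accommodate a strengthened ascription, not merely fresh addresses allocated by cuts.
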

\begin{proof}
We proceed by induction on the reduction step and then by inversion modulo subtyping on the typing derivation $D$. The cases where a single process steps---cuts, definition calls, and writes---are straightforward. The identity rule and projection/application of a continuation are also straightforward by copying the derivation of the object read to that of the destination. However, pattern matching on a value is non-trivial due to path sensitivity. For example, when $\oplus$R meets $\oplus$L at address $b$ subject to $\phi(b)$, the $k$\textsuperscript{th} premise of $\oplus$L requires the type of $b$ to be strengthened with the equality $b\equiv k\cdot a$ where $a$ is somewhere to the left in $D$. In updating the derivation for $b$ locally, $a$ would be flowed $\phi(k\cdot a)\land k\cdot a\equiv k\cdot a$, which is subsumed by $\phi(k\cdot a)$ via $\leq$L. Thus, the readers of $a$ see the same type ascription as before. To ensure that all readers of $b$ except for the scrutinized instance of $\oplus$L see the same type ascription as before, we inductively update their left ``slack'' premises noting that $\phi(b)\land b\equiv k\cdot a$ implies $\phi(b)$. 
\end{proof}
For an alternate proof strategy of type preservation that grapples with this strong form of path sensitivity in a functional setting, see \cite[Lemme 13.8.7 and Théorème 13.8.8]{RegisGianas2007thesis}. Now, to prove observable partial correctness, we follow DeYoung et.~al.~\cite{DeYoung2020fscd} and refer to addresses occurring in values as \emph{observable} with all else being \emph{hidden}. As a result, final configurations of purely positive type, whose only constituents are value cells, only contain observable addresses.
\begin{lem}[{Final Configurations of Purely Positive Type \cite[Corollary 12]{DeYoung2020fscd}}]
\label{lem:finalpos}
\emph{Purely} positive refined types $\A^{++}$ are those that only contain positive type constructors. Extending this definition to $\Gamma$ in the obvious way, if $\cdot\vdash\F\div\Gamma^{++}$, then $\F$ only contains objects of the form $\cell{a}{V}$ (whose addresses are observable).
\end{lem}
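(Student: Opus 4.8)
The plan is to prove this by \emph{right-to-left induction} on the configuration typing derivation for $\cdot\vdash\F\div\Gamma^{++}$, exploiting the fact that $\F$, being final, consists solely of cells and that the polarity of an address is rigidly tied to the shape of the cell stored there. Since $\proc{a}{(a.V)}\mapsto\cell{a}{V}$ arises from a positive right rule (writing a value to a positive destination) while $\proc{a}{(\Case a\,K)}\mapsto\cell{a}{K}$ arises from a negative right rule (writing a continuation to a negative destination), a value cell forces $a$ to have positive type and a continuation cell forces $a$ to have negative type. The whole argument is therefore a matter of propagating the purely positive property of $\Gamma^{++}$ leftward through the derivation. The base case $\F=\cdot$ is vacuous.

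For the inductive step I would write $\F=\F_1,O$ so that the join rule gives $\cdot\vdash\F_1\div\Gamma'$ and $\Gamma'\vdash O\div\Gamma^{++}$, with $O$ a cell by finality. The crux is to rule out $O=\cell{a}{K}$. In that case \textsc{cellK} yields $\Gamma'\vdash\proc{a}{(\Case a\,K)}\div\Gamma^{++}$, i.e. a continuation written to the destination $a$. Here I would invoke \emph{inversion modulo subtyping}: absorbing any terminal subsumption into the slack premises of the \textsc{proc} rule via transitivity of subtyping (Lemma \ref{lem:subtyping}), the process derivation must end in a negative right rule (for the function or lazy record type), forcing the type of $a$ to unfold to a negative type. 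But $a$ is declared in $\Gamma^{++}$ with purely positive type, and purely positive types stay positive under recursive unfolding and under subtyping; this contradiction leaves $O=\cell{a}{V}$ as the only possibility.

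It then remains to re-apply the induction hypothesis to $\F_1$, for which I must check that $\Gamma'$ is again purely positive. This is immediate from the \textsc{proc} rule underlying \textsc{cellV}: the source context passes through to the destinations with only $a:\B$ adjoined, so $\Gamma'$ is exactly $\Gamma^{++}$ with $a:\B$ deleted, and a sub-context of a purely positive context is purely positive. The induction hypothesis then gives that every cell of $\F_1$ is a value cell, and together with $O$ this closes the step; observability of the addresses is then read off directly from the definition, since every address appearing in a value is observable.

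I expect the main obstacle to be precisely the inversion-modulo-subtyping step that pins a continuation cell to a negative type. It requires carefully discharging the terminal subsumption into the \textsc{proc} rule's slack and establishing the two auxiliary stability facts---that ``purely positive'' is closed under recursive-type unfolding and under the subtyping relation---so that a purely positive declaration can never coincide with a negative head constructor even after unfolding. Once these structural facts are secured, the threading of contexts through \textsc{join} and \textsc{proc} is routine.
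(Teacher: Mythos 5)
Your proposal is correct and follows essentially the same route as the paper's (very terse) proof: right-to-left induction on the configuration typing derivation together with inversion modulo subtyping on each cell's process typing derivation, which forces every cell to hold a value. Your expansion---ruling out $\cell{a}{K}$ because a continuation can only be typed by a negative right rule whose type cannot be a subtype of a purely positive declaration, and checking that the source context threaded through \textsc{proc}/\textsc{join} remains purely positive---is exactly the detail the paper leaves implicit.
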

\begin{proof}
By right-to-left induction on the configuration typing derivation, inversion modulo subtyping on the process typing derivation for each cell reveals a value.
\end{proof}
By taking care of the indirection that observable addresses introduce, we can determine when such a final configuration \emph{satisfies} all of its postconditions.
\begin{theorem}[Observable Satisfaction]
$\F\vDash\Gamma^{++}$ is inductively generated by the rules below.
\[\infer{ }{\cdot\vDash\cdot}\qquad\infer{\F\vDash\Gamma^{++}\\\Gamma\vdash\phi(V)}{\F,\cell{a}{V}\vDash\Gamma^{++},a:A\mid\phi(\cdot)}\]
Now, if $\cdot\vdash\F\div\Gamma^{++}$, then $\F\vDash\Gamma^{++}$.
\end{theorem}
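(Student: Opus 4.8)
The plan is to proceed by \emph{right-to-left induction} on the configuration typing derivation $\cdot\vdash\F\div\Gamma^{++}$, so that it mirrors the inductive structure of the satisfaction judgment $\F\vDash\Gamma^{++}$. In the base case $\F=\cdot$ and $\Gamma^{++}=\cdot$, so $\cdot\vDash\cdot$ holds immediately. For the inductive step I first invoke Lemma \ref{lem:finalpos}: since $\cdot\vdash\F\div\Gamma^{++}$ with $\Gamma^{++}$ purely positive, $\F$ consists solely of value cells. Hence the rightmost object isolated by the induction is some $\cell{a}{V}$, and the derivation ends in a $join$ splitting $\F=\F',\cell{a}{V}$ into $\cdot\vdash\F'\div{\Gamma'}^{++}$ and ${\Gamma'}^{++}\vdash\cell{a}{V}\div\Gamma^{++}$ with $\Gamma^{++}={\Gamma'}^{++},a:A\mid\phi(\cdot)$. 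The induction hypothesis applied to the left premise yields $\F'\vDash{\Gamma'}^{++}$, discharging the first premise of the satisfaction rule. It remains to derive the second premise, ${\Gamma'}^{++}\vdash\phi(V)$.

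To obtain ${\Gamma'}^{++}\vdash\phi(V)$ I would invert the typing of $\cell{a}{V}$. The $cellV$ rule reduces it to typing the write $\proc{a}{(a.V)}$, and the $proc$ rule exposes the subtyping slack ${\Gamma'}^{++}\leq\Gamma_0$ together with $\Gamma_0\vdash(a.V)\div(a:\A)$ and $\Gamma_0\vdash\A\leq(A\mid\phi(\cdot))$. Absorbing the terminal subsumption into this slack via transitivity of subtyping (Lemma \ref{lem:subtyping}), inversion modulo subtyping forces the write to end in the positive right axiom for the head constructor of $V$: $\unit$R when $V=\pair{}$, $\otimes$R when $V=\pair{b,c}$, or $\oplus$R when $V=k\cdot b$. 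Each of these is an \emph{assignment-style} rule: to establish the postcondition $\phi(a)$ on the destination it flows the substituted precondition $\phi(V)$---obtained from $\phi(a)$ by replacing $a$ with the written value $V$---onto the source antecedents. Consequently $\Gamma_0\vdash\phi(V)$, as $\phi(V)$ is exactly the precondition recorded on those antecedents. Finally, because ${\Gamma'}^{++}\leq\Gamma_0$, predicate subtyping forces each refinement of ${\Gamma'}^{++}$ to imply the corresponding one of $\Gamma_0$, so every assertion provable under $\Gamma_0$ is provable under the logically stronger ${\Gamma'}^{++}$; hence ${\Gamma'}^{++}\vdash\phi(V)$. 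Applying the satisfaction extension rule to this together with $\F'\vDash{\Gamma'}^{++}$ completes the step.

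The main obstacle is this middle step: transporting the postcondition attached to $a$ back through the positive right axiom into a genuine assertion-logic derivation over the \emph{actual} context ${\Gamma'}^{++}$. Here the Hoare-style ``assignment rule'' reading of the positive right axioms is essential, and the subtyping slack of the $proc$ rule does the real work, since the precondition flowed to the antecedents is guaranteed only under the weaker $\Gamma_0$ and must be recovered under the stronger ${\Gamma'}^{++}$ by predicate subtyping. The dependent-pair case deserves the most care, as its postcondition flows to the second component by substitution of $\pair{b,c}$ for the destination; one must check that $\phi(V)$ is still well-formed over ${\Gamma'}^{++}$, which holds precisely because the right-to-left ordering guarantees that every address occurring in $V$ is already bound (as a value cell) to the left, i.e.\ in ${\Gamma'}^{++}$.
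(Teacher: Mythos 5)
Your proposal is correct and takes essentially the same route as the paper's own proof: right-to-left induction on the configuration typing derivation, Lemma \ref{lem:finalpos} to expose value cells, and inversion modulo subtyping to reduce to the positive right axioms, where $\phi(V)$ is either already assumed on the source antecedents ($\oplus$R, $\otimes$R) or proved directly as a premise ($\unit$R). Your explicit transport of $\phi(V)$ from the slack context $\Gamma_0$ back to ${\Gamma'}^{++}$ via predicate subtyping simply spells out a step the paper leaves implicit.
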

\begin{proof}
By right-to-left induction on the configuration typing derivation, we have $\F=\F_1,\cell{a}{V}$ and $\Gamma^{++}=\Gamma_1^{++},a:A\mid\phi(\cdot)$ where $\F_1\vDash\Gamma_1^{++}$. Thus, it suffices to prove $\Gamma_1^{++}\vdash\phi(V)$. By Lemma \ref{lem:finalpos}, said derivation ends in an instance of \textsc{cellV} exposing a process typing derivation. By inversion modulo subtyping, it suffices to only consider right axioms, in which case $\phi(V)$ is already assumed in $\Gamma_1^{++}$ or is proved directly. For example, $\oplus$R assumes $\phi(k\cdot b)$ to type $a.k\cdot b$, whereas $\unit$R proves $\Gamma_1^{++}\vdash\phi(\pair{})$ for $a.\pair{}$.
\end{proof}
Thus, a well-typed configuration is observably partially correct---it either does not terminate or terminates at a final configuration where all of its purely positive subconfigurations observably satisfy their associated postconditions. We formalize this by the following corollary, which combines a coinductive characterization of type soundness \cite{Leroy2006esop} and partial correctness \cite{Clarke1977focs,Goguen1999mscs,Moore2018esop}.
\begin{cor}[Type Soundness and Observable Partial Correctness]
Let $\F$ be $\Gamma$-safe iff for all $\Gamma^{++}\subseteq\Gamma$, there exists $\F'\subseteq\F$ such that $\F'\vDash\Gamma^{++}$. Then, let $\C$ be $\Gamma$-safe iff, coinductively, $\C\mapsto\C'$ and $\C'$ is $\Gamma$-safe. Thus, if $\cdot\vdash\C\div\Gamma$, then $\C$ is $\Gamma$-safe.
\end{cor}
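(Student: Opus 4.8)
The plan is to fuse the three preceding results—\emph{Progress}, \emph{Preservation}, and \emph{Observable Satisfaction}—into a single coinductive argument in the style of Leroy's coinductive type soundness. Read carefully, ``$\Gamma$-safe'' on an arbitrary configuration is the \emph{greatest} fixed point of the operator whose one-step unfolding says: a \emph{final} configuration $\F$ is $\Gamma$-safe exactly when, for every $\Gamma^{++}\subseteq\Gamma$, some $\F'\subseteq\F$ has $\F'\vDash\Gamma^{++}$; and a non-final $\C$ is $\Gamma$-safe exactly when $\C\mapsto\C'$ for some $\Gamma$-safe $\C'$. Since the recursive occurrence sits in positive position this operator is monotone, so the gfp exists and the coinduction principle applies: to establish $\cdot\vdash\C\div\Gamma\Rightarrow\C$ is $\Gamma$-safe it suffices to exhibit a relation $R$ on pairs $(\C,\Gamma)$ that contains the root and is closed under this unfolding.

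The first maneuver is to choose $R$ so that the context growth in \emph{Preservation} costs nothing. I would take $R=\{(\C,\Gamma)\mid\exists\,\Gamma'\supseteq\Gamma.\ \cdot\vdash\C\div\Gamma'\}$. Because $\Gamma$-safety only quantifies over $\Gamma^{++}\subseteq\Gamma$, enlarging the typing context can only help, so this definition builds in context-monotonicity of safety for free and lets me specialize $\Gamma'=\Gamma$ at the root to conclude. Fix $(\C,\Gamma)\in R$ witnessed by $\cdot\vdash\C\div\Gamma'$ with $\Gamma'\supseteq\Gamma$, and split on whether $\C$ is final.

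In the non-final case I would apply \emph{Progress} to $\cdot\vdash\C\div\Gamma'$ to obtain a step $\C\mapsto\C'$, then \emph{Preservation} to get $\cdot\vdash\C'\div\Gamma''$ with $\Gamma''\supseteq\Gamma'\supseteq\Gamma$. Hence $(\C',\Gamma)\in R$ (witnessed by $\Gamma''$), which is precisely the step clause of the unfolding, and the coinduction hypothesis then delivers that $\C'$ is $\Gamma$-safe. In the final case I must show, for each $\Gamma^{++}\subseteq\Gamma\subseteq\Gamma'$, a witnessing $\F'\subseteq\C$ with $\F'\vDash\Gamma^{++}$, which is where \emph{Observable Satisfaction} (together with Lemma \ref{lem:finalpos}) enters once I have a sub-configuration typed at a purely positive context.

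The main obstacle is exactly this restriction step. \emph{Observable Satisfaction} is stated for a configuration typed at the purely positive context in question, so I cannot simply intersect $\C$ with the addresses of $\Gamma^{++}$: a value such as $k\cdot a$ or $\pair{a,b}$ refers to further addresses that must also be present for the sub-configuration to type with an empty source. The observation that rescues the argument is that the component types of $\unit$, $\otimes$, and $\oplus$ are themselves purely positive, so the downward closure $\overline{\Gamma^{++}}\subseteq\Gamma'$ of $\Gamma^{++}$ under the value-reference relation remains purely positive; taking $\F'$ to be the cells of this closure yields $\cdot\vdash\F'\div\overline{\Gamma^{++}}$ and hence $\F'\vDash\overline{\Gamma^{++}}$ by \emph{Observable Satisfaction}, from which the postconditions of the addresses in $\Gamma^{++}$ read off directly. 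Making this closure and its well-typedness precise—i.e.\ that restricting a final, well-typed configuration to a purely positive, reference-closed set of addresses is again well-typed—is the one piece of genuine bookkeeping; everything else is an immediate assembly of Progress, Preservation, and Observable Satisfaction under the coinduction principle.
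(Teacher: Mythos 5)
The paper states this corollary without an explicit proof---it is meant to be the immediate coinductive assembly of Progress, Preservation, and Observable Satisfaction---and your skeleton is exactly that intended assembly: the relation $R=\{(\C,\Gamma)\mid\exists\Gamma'\supseteq\Gamma.\ \cdot\vdash\C\div\Gamma'\}$ absorbs the context growth from Preservation, and your non-final case is airtight. The problem is the final case, at the words ``read off directly,'' and it is not mere bookkeeping. Two separate things break there. First, contexts of runtime addresses are \emph{dependent}: the refinement attached to one address may mention other addresses (that is the point of the system; compare the signature of addition, where the refinement $z\equiv x+y$ on the result mentions both arguments). So closing $\Gamma^{++}$ under the addresses occurring in \emph{values} does not make the restriction self-contained: the entries you drop may carry exactly the assumptions that were used to discharge the surviving postconditions, and such entries need not be purely positive, so they cannot simply be added to the closure while keeping Observable Satisfaction applicable. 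Second, even when the closure $\overline{\Gamma^{++}}$ is well behaved, $\F'\vDash\overline{\Gamma^{++}}$ does not produce what the statement literally demands, namely some $\F''\subseteq\F$ with $\F''\vDash\Gamma^{++}$: the satisfaction rules hand each postcondition only the prefix of its \emph{own} context as assumptions, so provability under the closure's assumptions does not restrict to provability under the assumptions of $\Gamma^{++}$ alone.

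A two-cell example shows the failure concretely. Take $\F=\cell{c}{\pair{}},\cell{a}{\pair{}}$ and $\Gamma=(c:\unit\mid\lambda x.\,x\equiv\pair{}),(a:\unit\mid\lambda x.\,x\equiv c)$; this configuration types ($\unit$R discharges $\pair{}\equiv c$ using the assumption $c\equiv\pair{}$ from the prefix), and the sub-context $\Gamma^{++}=(a:\unit\mid\lambda x.\,x\equiv c)$ is purely positive and already closed under your value-reference relation, yet no $\F'\subseteq\F$ satisfies it, because $\cdot\vdash\pair{}\equiv c$ is not derivable for the uninterpreted nullary symbol $c$. So under a literal reading of ``for all $\Gamma^{++}\subseteq\Gamma$'' the corollary itself is false, and your proof fails exactly where it must: the quantification has to be restricted to \emph{self-contained} purely positive sub-contexts---closed under the addresses occurring both in cell values and in the refinements themselves---and under that reading your argument does go through (your closure observation covers the value part, and the prefix mismatch disappears because the sub-context then carries its own assumptions). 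In short: right skeleton, the same approach the paper intends, but the last step as written does not close, and repairing it requires sharpening the statement rather than more bookkeeping.
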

We finish by commenting on the generality of our partial correctness result---because hidden addresses can be made observable by projecting or applying the continuations that hide them, we do not lose power by restricting our attention to observability.
\section{Related Work}
We view DRSAX on a spectrum between languages that \emph{model} concurrency and/or parallelism without native support for them at one end and process calculi with dependent (session) types of varying expressivity at the other. Before we elaborate on this dichotomy, we note that our treatment of codata seems to be related to logical approaches to object encapsulation in the presence of mutable state \cite{Hoare1972acta,Hoare2002,OHearn2009toplas}. Moreover, refer to \cite{Basold2016lics,Basold2018,Basold2019jlc} for reasoning about terminating mixed inductive-coinductive programs.
\subsection{Language-Based Verification, Concurrency, and Parallelism}\label{subsec:relwork-first}
Projects like SteelCore \cite{Swamy2020icfp} and FCSL \cite{Nanevski2014esop} implement a variation of \emph{concurrent separation logic} \cite{OHearn2004concur,Jung2018jfp} in a metalanguage---in these cases, F\textsuperscript{$\ast$} or Coq---from which various shared memory and message-passing constructs can be modeled. Similar efforts that do not use separation logic include that in Dafny \cite{Leino2018} and Why3 \cite{Santos2015ice}. Our interest is ``one level up''---determining a core language that could, in theory, be embedded in the languages discussed via the constructs that they model, intersecting with our discussion of embedded session types below. One exception to this thread is Liquid Effects \cite{Kawaguchi2012pldi}, in which dependent type refinements are retrofitted directly onto a parallel dialect of C.
\subsection{Dependent and Embedded Session Types}\label{subsec:relwork-snd}
Toninho et al.~\cite{Toninho2011ppdp} initiated the line of work on dependent session types by presenting a session-typed process calculus in Curry-Howard correspondence with first-order intuitionistic linear logic over a domain of non-linear proof terms. In particular, proof terms are not allowed to refer to the channels with which processes communicate in the linear layer. In their retrospective paper ten years later \cite{Toninho2021ppdp}, they note that many subsequent developments \cite{Toninho2017jlamp,Thiemann2020popl,Das20concur} have similar restrictions precisely because non-linear dependence on linear objects is problematic. As somewhat of an exception to the rule, Toninho and Yoshida \cite{Toninho2018fossacs} allow proof terms to depend on quoted processes by way of a \emph{contextual monad} \cite{Toninho2013esop}, related to that of dependent linear/non-linear logic \cite{Krishnaswami15}. The relaxation of the restriction on type dependency comes at the cost of process/term-level duplication, since functional terms can be embedded faithfully into processes---DRSAX need not make this distinction.

Another line of work seeks to embed session type systems into existing dependent type theories, allowing meta-level reasoning about processes and the exploitation of existing language infrastructure \cite{Brady2010,Wu2017corr,Hughes2019places,Effpi,Hinrichsen2020popl,Marshall2022places}. Embedded implementation is certainly not opposed by DRSAX nor the line of work above, but moving the burden of proof to the meta level requires explicit reasoning about the typing and operational semantics of programs to an extent determined by the embedding depth.
\section{Conclusion and Future Work}
In this paper, we have developed DRSAX, a sound integration of expressive dependent type refinements into SAX, a futures-based process calculus, by adhering to its proof-theoretic discipline. The distinction between data and codata is navigated through the design of the language as well as the metatheory, which begins with typing rules respecting codata encapsulation and culminates in observable partial correctness as a result of type soundness. Moreover, our mixed inductive-coinductive view of (sub)typing gives a uniform treatment of induction, coinduction, and mixed induction and coinduction within the language. There are multiple avenues of future work:
\begin{enumerate}
\item\emph{Types}: we are interested in extending the type structure of DRSAX primarily by abstraction both over types \cite{Das2022toplas} and refinements \cite{Vazou2013esop}.
\item\emph{Effects}: whether there is a proof-theoretic interpretation of various concurrent effects is still an open question. Non-mutable memory reuse can be interpreted with snips \cite{DeYoung2022mfps}, thus raising the question of how mutability could be introduced. In the setting of session types, \emph{hypersequents} have been used to introduce races in linear logic \cite{Kokke2019coordination}.
\item\emph{Implementation}: the presence of quantifiers in the assertion logic and our mixed inductive-coinductive view of (sub)typing jeopardizes decidable typechecking. With an eye towards implementation, we aim to investigate various quantifier instantiation strategies as well as a restriction to \emph{circular} \cite{Dagnino} (sub)typing derivations which are finitely-representable and thus may admit terminating search \cite{Das20concur}.
\end{enumerate}
\bibliographystyle{entics}
\bibliography{refs}
\end{document}